\newtheorem{theorem}{Theorem}
\newtheorem{definition}{Definition}
\newtheorem{example}{Example}
\newtheorem{proposition}{Proposition}
\DeclareMathOperator*{\cbar}{\overline{\mathcal{C}}}
\begin{document}
%
% paper title
% Titles are generally capitalized except for words such as a, an, and, as,
% at, but, by, for, in, nor, of, on, or, the, to and up, which are usually
% not capitalized unless they are the first or last word of the title.
% Linebreaks \\ can be used within to get better formatting as desired.
% Do not put math or special symbols in the title.
\title{Tiling of Constellations}

% author names and affiliations
% use a multiple column layout for up to three different
% affiliations

\author{\IEEEauthorblockN{Maiara F. Bollauf and \O yvind Ytrehus}
\vspace{0.2cm}
\IEEEauthorblockA{Simula UiB,
N-5008, Bergen, Norway \\
Email: {\{maiara, oyvindy\}}@simula.no}}

% conference papers do not typically use \thanks and this command
% is locked out in conference mode. If really needed, such as for
% the acknowledgment of grants, issue a \IEEEoverridecommandlockouts
% after \documentclass

% use for special paper notices
%\IEEEspecialpapernotice{(Invited Paper)}

% make the title area
\maketitle

% As a general rule, do not put math, special symbols or citations
% in the abstract
\begin{abstract} Motivated by applications in reliable and secure communication, we address the problem of tiling (or partitioning) a finite constellation in $\mathbb{Z}_{2^L}^n$ by subsets, in the case that the constellation does not possess an abelian group structure. The property that we do require is that the  constellation is generated by a linear code through an injective mapping. The intrinsic relation between the code and the constellation provides a sufficient condition for a tiling to exist. We also present a necessary condition. Inspired by a result in group theory, we discuss results on tiling for the particular case when the finer constellation is an abelian group as well. %Despite being an interesting problem by itself, the motivation for such study comes from the implementation of coset encoding in a wiretap channel, where a transmitter should add redundant bits to the sent message in order to confuse an eavesdropper which has access to the communication channel. Promising alternatives of application also can be found in cryptography. 
\end{abstract}

%\vspace{0.1cm}

%{\small \textbf{\textit{Index terms}---}} Tiling, coset encoding, lattices, wiretap channel. 

% no keywords

% For peer review papers, you can put extra information on the cover
% page as needed:
% \ifCLASSOPTIONpeerreview
% \begin{center} \bfseries EDICS Category: 3-BBND \end{center}
% \fi
%
% For peerreview papers, this IEEEtran command inserts a page break and
% creates the second title. It will be ignored for other modes.
\IEEEpeerreviewmaketitle

\vspace{0.1cm}

\section{Introduction}

	The problem of tiling consists of factorizing (or partitioning) a group into sets such that their intersection is just the identity element of the group. This problem attracted a lot of attention since Haj\'os \cite{hajos49} reduced the Minkowski study of filling $\mathbb{R}^n$ by congruent cubes to the factorization of finite groups. Since then, several authors \cite{newman77, redei65, sands79, szabo09} contributed towards this solution, but many circumstances are still unsolved.
	
	Besides the mathematical interest around tiling, including the estimation of Ramsey number of graphs \cite{szabo09}, applications in coding theory are also well established. Connections between tilings and perfect codes were addressed in \cite{cohen96}, for tilings of $\mathbb{F}_2^n$ with the Hamming metric and in \cite{strey21}, for tilings of general lattices by their sublattices with the Euclidean metric. Tilings of finite abelian groups were also explored in \cite{qu94} in the context of public key cryptography.

	The motivation to our study comes from the wiretap channel, which is a model of communication that concerns both reliability and security. First introduced by Wyner \cite{wyner75}, the assumption is that the transmitter aims to send a message to a receiver through a main channel, which is connected to an eavesdropper. The general idea is to construct an encoder and decoder to jointly optimize the data transmission and the equivocation seen by the illegitimate receiver. This tradeoff, denoted by secrecy capacity, was proved to be achieved with random coding arguments and an encoding scheme called coset encoding, where the transmitter encodes the message in terms of the corresponding \emph{syndrome}. Several papers addressed this problem considering a variety of communication channels and coding schemes, including \cite{mahdavifar11, ozarov84, thangaraj07}.
	
	Tiling a vector space is always possible. For that reason, linear codes make for a natural alternative for coset encoding through their cosets. On the other hand, lattices with their group structure also admit such characterization and they have been used to demonstrate security properties on the  wiretap Gaussian channel, as discussed in \cite{ling14, oggier15}. Nevertheless, the receiver is supposed to perform a lattice decoding, which is known to be a hard problem. One approach to the lattice decoding problem is to select a constellation that allows an efficient decoding on the the main channel, for example, Construction C (or multilevel code formula \cite{forney88}), denoted by $\Gamma_C.$ Asymptotically, Construction C achieves  capacity at high SNR on the AWGN channel  with multistage decoding \cite{forney00}.
	
	\emph{Contributions:} In general, $\Gamma_C$ produces a nonlattice constellation, generated by a finite set in $\mathbb{Z}_{2^L}^n$. Recently a subset $\Gamma_{C^\star} \subseteq \Gamma_C$ was introduced \cite{bzc2019}, by associating the linear codes underlying both constructions. In this work, we discuss conditions for the existence of a tiling of $\Gamma_C$ by classes of $\Gamma_{C^\star}.$ % with those subsets to be able to apply the same original idea of the coset encoding for the wiretap channel. 
	One condition emerges from the linear code generating such constellations, as we can always tile a code with its cosets, but the second approach involves solving a pure mathematical tiling problem and taking into account the particular case where $\Gamma_C$ is an abelian group (or a lattice). While motivated by the potential application to the wiretap channel, the theoretical problem itself is the emphasis of this paper.  

	This paper is organized as follows: Sec.~\ref{Sec:2} presents preliminary definitions of codes, lattices, and also a way to map codes to constellations such as $\Gamma_C.$ Sec.~\ref{Sec:3} recalls tiling properties of linear codes. Sec.~\ref{Sec:4} explores a condition inherited from the linear code that guarantees the tiling for the most general form of $\Gamma_C$ (lattice and nonlattice). Sec.~\ref{Sec:5} focuses on the case where $\Gamma_C$ is an abelian group (lattice). Finally,  Sec.~\ref{Sec:Conc} concludes the paper.

% no \IEEEPARstart

%This demo file is intended to serve as a ``starter file''
%for IEEE conference papers produced under \LaTeX\ using
%IEEEtran.cls version 1.8b and later.
% You must have at least 2 lines in the paragraph with the drop letter
% (should never be an issue)

%\hfill mds
 
%\hfill August 26, 2015

\section{Mapping linear codes to a constellation} \label{Sec:2}

	There are two essential mathematical concepts that underlie our work, which are linear codes and lattices.

\begin{definition} A set $\mathcal{C} \subseteq \mathbb{F}_2^n$ is a binary linear $[n,k]-$code if it is a $k-$dimensional subspace of $\mathbb{F}_2^n.$ 
\end{definition}

\begin{definition} A lattice $\Lambda \subset \mathbb{R}^n$ is a discrete abelian subgroup of $\mathbb{R}^n.$
\end{definition}

    	From now on, we will admit a linear code $\mathcal{C} \subset \mathbb{F}_2^{nL}$ and split it into $L$  \emph{projection} codes $\mathcal{C}_1,\ldots,\mathcal{C}_L,$ where each $\mathcal{C}_i \subseteq \mathbb{F}_2^{n}$ is the restriction of $\mathcal{C}$ to coordinate positions $(i-1)n +1,\ldots,in,$ where $i=1, \dots, L.$ Accordingly, each codeword ${\bf c} \in \mathcal{C}$ can be written as ${\bf c} = ({\bf c}_1, \dots, {\bf c}_L)$ with ${\bf c}_i \in \mathcal{C}_i $.

	The mapping $\psi: \mathbb{F}_2^{nL} \rightarrow \mathbb{Z}_{2^L}^n$ defined as 
\begin{align}\label{eq_psi}
\psi({\bf c}_1, \dots, {\bf c}_L)  = ~ {\bf c}_1+2{\bf c}_{2}+\dots+2^{L-1}{\bf c}_{L},
\end{align}
associates a code $\mathcal{C} \subset \mathbb{F}_2^{nL}$ to a finite constellation (not necessarily an abelian group), where each ${\bf c}_i \in \mathbb{F}_2^n.$ %Observe that $\psi$ is bijective on its image $\psi(\mathcal{C}) \subseteq \mathbb{Z}_{2^L}^n.$

\begin{definition}\cite{bzc2019}\label{def_cstar} Consider a linear code $\mathcal{C} \subset \mathbb{F}_2^{nL}.$ The infinite constellation defined by $\Gamma_{C^\star} = \psi(\mathcal{C})+2^L\mathbb{Z}^n = Y + 2^L\mathbb{Z}^n$ is called Construction $C^\star.$ 
\end{definition}
	
	%If we place the parcel ${\bf c}_i \in \mathbb{F}_2^n,$ $i=1, \dots, L$  of all codewords ${\bf c} = ({\bf c}_1, \dots, {\bf c}_L) \in \mathcal{C} \subseteq \mathbb{F}_2^{nL}$ into a code $\mathcal{C}_i,$ because $\mathcal{C}$ is a linear code, the so called projection codes $\mathcal{C}_i \subseteq \mathbb{F}_2^{n},$ will also be linear for  $i=1, \dots, L$.
	
\begin{definition}\cite{forney88} \label{def_c} Let $\overline{\mathcal{C}} = \mathcal{C}_1 \times \dots \times \mathcal{C}_L,$ where each $\mathcal{C}_i \subseteq \mathbb{F}_2^{n}$ is the respective projection code of a linear code $\mathcal{C} \subset \mathbb{F}_2^{nL}$, for $i=1,\dots,L.$ We denote by Construction C the set $\Gamma_C = \psi(\overline{\mathcal{C}})+2^L\mathbb{Z}^n = X + 2^L\mathbb{Z}^n.$ 
\end{definition}	 

   Throughout the paper, the subsets $X$ and $Y$ would always refer to $X=\psi(\mathcal{\overline{C}})$ and $Y=\psi(\mathcal{C}),$ for $\overline{\mathcal{C}},\mathcal{C} \subseteq \mathbb{F}_2^{nL}$ as in Definitions \ref{def_cstar} and \ref{def_c}.
    
\begin{example} 
\label{ex:simple}
Consider $\mathcal{C} \subset \mathbb{F}_2^{6},$ with $n=2, L=3,$ given by
\begin{align}
\mathcal{C} = & \{ (0,0,0,0,0,0),(0,0,0,1,0,0),(1,1,0,0,0,0), \nonumber \\
& ~ (1,1,0,1,0,0),(0,0,1,0,1,0),(0,0,1,1,1,0), \nonumber \\
& ~ (1,1,1,0,1,0),(1,1,1,1,1,0)\}.
\end{align}
Here,  generator matrices of the codes  $\mathcal{C}$,  $\overline{\mathcal{C}}$, $\mathcal{C}_1$, $\mathcal{C}_2$, and $\mathcal{C}_3$ are given (respectively) by
\begin{align*}
G = & \begin{pmatrix}
1 & 1 & 0 & 0 & 0 & 0 \\
0 & 0 & 0 & 1 & 0 & 0 \\ 
0 & 0 & 1 & 0 & 1 & 0 
\end{pmatrix}, ~
\overline{G} = \begin{pmatrix}
1 & 1 & 0 & 0 & 0 & 0 \\
0 & 0 & 0 & 1 & 0 & 0 \\ 
0 & 0 & 1 & 0 & 0 & 0  \\
0 & 0 & 0 & 0 & 1 & 0
\end{pmatrix},
\end{align*}
\vspace{-0.3cm}
\begin{align*}
G_1  = & \begin{pmatrix}
1 & 1 
\end{pmatrix}, ~
G_2 =  \begin{pmatrix}
 0 & 1  \\ 
 1 & 0  
\end{pmatrix}, ~
G_3 = \begin{pmatrix}
1 & 0 
\end{pmatrix}.
\end{align*}

    Here $\Gamma_{C^\star}=\psi(\mathcal{C}) +  8\mathbb{Z}^2 = Y + 8\mathbb{Z}^2,$ where $Y=\{(0,0),$ $(0,2),(1,1),(1,3),(6,0),(6,2),(7,1),$ $(7,3)\}.$ And $\Gamma_C=\psi(\cbar)+8\mathbb{Z}^2 = X + 8\mathbb{Z}^2,$ where $X=Y \cup \{(4,0),(4,2),(2,0),$ $(2,2),(5,1),(5,3),(3,1),(3,3)\}.$
\end{example}

	Observe that if $\mathcal{C}=\overline{\mathcal{C}},$ then $\Gamma_{C^\star}= \Gamma_C,$ however for any other case $\Gamma_{C^\star}\subset \Gamma_C.$ Under general assumptions, both $\Gamma_{C^\star}$ and $\Gamma_{C}$ are nonlattice constellations. To discuss necessary and sufficient conditions such that $\Gamma_C$ is a lattice, we need to define the Schur (or coordinate-wise) product.
	
\begin{definition} The Schur product of ${\bf x}=(x_1, \dots, x_n),~ {\bf y}=(y_1, \dots, y_n) \in \mathbb{F}_2^{n}$ is ${\bf x} \ast {\bf y} = (x_1y_1, \dots, x_ny_n) \in \mathbb{F}_2^n.$ 
\end{definition}

    A chain of linear codes $\mathcal{C}_1 \subseteq \dots \subseteq \mathcal{C}_L$ is said to be closed under Schur product if for any two elements ${\bf c}_i, {\bf \tilde{c}}_i \in \mathcal{C}_i,$ it is valid that ${\bf c}_i \ast {\bf \tilde{c}}_i \in \mathcal{C}_{i+1},$ for all $i=1, \dots, L-1.$ In Definition~\ref{def_c}, $\Gamma_C$ is a lattice if and only $\mathcal{C}_1 \subseteq \dots \subseteq \mathcal{C}_L$ and this chain is closed under Schur product \cite{kosit14}. 
    
    Next, we are going to express operations in $\mathbb{Z}_{2^L}^n$ in terms of operations in $\mathbb{F}_2^n.$ 
	
	If $\phi: \mathbb{F}_2^n \rightarrow \mathbb{Z}^n$ is the natural embedding, then %is it valid that 
	$\phi({\bf x})+\phi({\bf y}) = \phi({\bf x} \oplus {\bf y}) + 2 \phi({\bf x} \ast {\bf y}),$ where $\oplus$ denotes the addition in $\mathbb{F}_2^n$ and $+$ denotes the addition in $\mathbb{Z}^n.$ We will write ${\bf x} + {\bf y} = {\bf x} \oplus {\bf y} + 2 ({\bf x} \ast {\bf y})$ to simplify the notation.

\begin{proposition}\label{prop1} Consider $\mathcal{C} \subseteq \mathbb{F}_2^{nL}$ a linear code and the mapping $\psi$ defined as in~\eqref{eq_psi}. Then, for ${\bf c},{\bf \tilde{c}} \in \mathcal{C},$
\[
\psi({\bf c}) + \psi({\bf \tilde{c}})= \psi({\bf c} \oplus {\bf \tilde{c}})  + 2 \psi({\bf c} \ast {\bf \tilde{c}}),
\]
%for ${\bf c},{\bf \tilde{c}} \in \mathcal{C}.$
\end{proposition}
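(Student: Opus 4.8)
The plan is to reduce the vector identity to the coordinate-wise scalar identity already established for the natural embedding $\phi$, and then handle the extra bookkeeping coming from the base-$2^L$ weighting in the definition of $\psi$. Write ${\bf c}=({\bf c}_1,\dots,{\bf c}_L)$ and ${\bf \tilde c}=({\bf \tilde c}_1,\dots,{\bf \tilde c}_L)$ with ${\bf c}_i,{\bf \tilde c}_i\in\mathbb{F}_2^n$, so that by \eqref{eq_psi} we have $\psi({\bf c})=\sum_{i=1}^L 2^{i-1}\phi({\bf c}_i)$ and similarly for $\psi({\bf \tilde c})$, viewing the sum in $\mathbb{Z}^n$. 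Adding these gives $\psi({\bf c})+\psi({\bf \tilde c})=\sum_{i=1}^L 2^{i-1}\bigl(\phi({\bf c}_i)+\phi({\bf \tilde c}_i)\bigr)$, and now I would substitute the identity recalled just before the proposition, namely $\phi({\bf c}_i)+\phi({\bf \tilde c}_i)=\phi({\bf c}_i\oplus{\bf \tilde c}_i)+2\phi({\bf c}_i\ast{\bf \tilde c}_i)$, applied in each block $i$.

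Substituting termwise, the right-hand side becomes $\sum_{i=1}^L 2^{i-1}\phi({\bf c}_i\oplus{\bf \tilde c}_i)+\sum_{i=1}^L 2^{i}\phi({\bf c}_i\ast{\bf \tilde c}_i)$. The first sum is exactly $\psi({\bf c}\oplus{\bf \tilde c})$, since $\oplus$ acts componentwise on the blocks and the $i$-th block of ${\bf c}\oplus{\bf \tilde c}$ is ${\bf c}_i\oplus{\bf \tilde c}_i$. For the second sum, I would reindex: $\sum_{i=1}^L 2^i\phi({\bf c}_i\ast{\bf \tilde c}_i)=2\sum_{i=1}^L 2^{i-1}\phi({\bf c}_i\ast{\bf \tilde c}_i)=2\,\psi({\bf c}\ast{\bf \tilde c})$, again using that $\ast$ acts blockwise so the $i$-th block of ${\bf c}\ast{\bf \tilde c}$ is ${\bf c}_i\ast{\bf \tilde c}_i$. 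Combining the two pieces yields $\psi({\bf c})+\psi({\bf \tilde c})=\psi({\bf c}\oplus{\bf \tilde c})+2\psi({\bf c}\ast{\bf \tilde c})$, which is the claim.

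One subtlety I want to be careful about is that $\psi$ has codomain $\mathbb{Z}_{2^L}^n$, so all of the displayed identities are to be read modulo $2^L$ in each coordinate; the cleanest way to organize the argument is to first prove the identity over $\mathbb{Z}^n$ using $\phi$ (where everything is an honest integer equation and the reindexing is transparent), and only then reduce modulo $2^L$. The term $2^{L}\phi({\bf c}_L\ast{\bf \tilde c}_L)$ arising from $i=L$ in the second sum vanishes modulo $2^L$, but it is equally harmless on the other side because $2\psi({\bf c}\ast{\bf \tilde c})$ already carries a factor $2^L$ on its top block — so the identity holds at the integer level before reduction, and a fortiori after it.

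The main obstacle, if any, is purely notational: keeping straight that $\psi$, $\oplus$, and $\ast$ all respect the block decomposition into the $L$ projection coordinates, so that applying the scalar identity $\phi({\bf x})+\phi({\bf y})=\phi({\bf x}\oplus{\bf y})+2\phi({\bf x}\ast{\bf y})$ independently in each of the $L$ blocks and then summing with the weights $2^{i-1}$ really does reassemble into the three terms $\psi({\bf c})+\psi({\bf \tilde c})$, $\psi({\bf c}\oplus{\bf \tilde c})$, and $2\psi({\bf c}\ast{\bf \tilde c})$. There is no genuine mathematical difficulty beyond the linearity of $\phi$-summation and the shift of index by one that converts the factor $2^i$ into $2\cdot 2^{i-1}$; the fact that ${\bf c},{\bf \tilde c}\in\mathcal{C}$ (rather than arbitrary binary vectors) is not even needed for this particular identity, though it matters for how the proposition is used later.
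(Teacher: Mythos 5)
Your proof is correct and follows essentially the same route as the paper's: both apply the blockwise identity $\phi({\bf x})+\phi({\bf y})=\phi({\bf x}\oplus{\bf y})+2\phi({\bf x}\ast{\bf y})$ in each of the $L$ weighted blocks and regroup, the only difference being that the paper starts from $\psi({\bf c}\oplus{\bf \tilde c})$ and solves for the sum while you start from $\psi({\bf c})+\psi({\bf \tilde c})$ directly. Your added remark about proving the identity over $\mathbb{Z}^n$ first and then reducing modulo $2^L$ is a useful clarification the paper leaves implicit, but it does not change the argument.
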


\begin{proof} We have by definition that
\begin{align*}
\psi({\bf c} \oplus {\bf \tilde{c}}) = & ({\bf c}_1 \oplus {\bf \tilde{c}}_1) + \dots + 2^{L-1}({\bf c}_{L-1} \oplus {\bf \tilde{c}}_{L-1}) \\
=  & ({\bf c}_1 + {\bf \tilde{c}}_1 - 2({\bf c}_1 \ast {\bf \tilde{c}}_1)) + \dots + \\
& + 2^{L-1}({\bf c}_{L-1} + {\bf \tilde{c}}_{L-1} -2({\bf c}_{L-1} \ast {\bf \tilde{c}}_{L-1} )) \\
= & ({\bf c}_1 + \dots + 2^{L-1}{\bf c}_{L-1}) + ({\bf \tilde{c}}_1 + \dots + 2^{L-1}{\bf \tilde{c}}_{L-1})\\
& - 2(({\bf c}_1 \ast {\bf \tilde{c}}_1) + \dots + 2^{L-1}({\bf c}_{L-1} \ast {\bf \tilde{c}}_{L-1})) \\
= & \psi({\bf c}) + \psi({\bf \tilde{c}}) -2 \psi({\bf c} \ast {\bf \tilde{c}}).  \tag*{\qedhere}
\end{align*}
\end{proof}

	Besides this operation, it will be helpful to establish the notion of level shift.

\begin{definition} Given a codeword ${\bf c}=({\bf c}_1, \dots, {\bf c}_L) \in \mathcal{C} \subset \mathbb{F}_2^{nL},$ we define $${\bf c}^{(i)} = (\underbrace{{\bf 0}, \dots, {\bf 0}}_{i},{\bf c}_1, \dots, {\bf c}_{L-i-1},  {\bf c}_{L-i})$$ as the ($i-$th) level shift of the codeword ${\bf c},$ for any $i=1,\dots, L-1.$
\end{definition}

    For example, ${\bf c}^{(1)}=({\bf 0},{\bf c}_1, \dots, {\bf c}_{L-1})$ and ${\bf c}^{(L-1)}=({\bf 0},{\bf 0}, \dots,{\bf c}_1).$ From this definition, it is clearly valid that $2 \psi({\bf c}) = \psi({\bf c}^{(1)}).$ Throughout the paper, we will also refer to the \textit{Schur level shift}, which means that, given ${\bf c} \ast {\bf \tilde{c}} = ({\bf c}_1 \ast {\bf \tilde{c}}_1, \dots, {\bf c}_L \ast {\bf \tilde{c}}_L),$ then $({\bf c} \ast {\bf \tilde{c}})^{(1)} = ({\bf 0},{\bf c}_1 \ast {\bf \tilde{c}}_1 , \dots, {\bf c}_{L-1} \ast {\bf \tilde{c}}_{L-1}),$ for ${\bf c}, {\bf \tilde{c}} \in \mathbb{F}_2^{nL}.$
	
	From Proposition~\ref{prop1}, it might also be necessary to work with the representation of a negative number in $\mathbb{Z}_{2^L}^n.$ For this reason, we use the \textit{two's complement} as follows
\begin{align} \label{eq_negative0}
-\psi({\bf c}) = & ~ \psi((1,\dots,1) \oplus {\bf c}) + \psi(1,0,\dots,0) \nonumber \\
= & ~ \psi((1,\dots,1) \oplus {\bf c} \oplus (1,0,\dots,0)) + \nonumber \\
& ~ 2\psi(\underbrace{((1,\dots,1) \oplus {\bf c}) \ast (1,0,\dots,0)}_{{\bf u}}).
\end{align}
Then, ${\bf u}={\bf 0}$ or ${\bf u}=(1,0,\dots,0)$ and
\begin{align}\label{eq_negative}
-\psi({\bf c}) = & \begin{cases} 
\psi((0,1,\dots, 1) \oplus {\bf c}), ~\text{if}~ {\bf u} = {\bf 0}  \\
\psi((0,1,\dots, 1) \oplus {\bf c} ) + \psi(\underbrace{0,1,\dots,0}_{(1,0,\dots,0)^{(1)}}), ~\text{otherwise}.
\end{cases}
\end{align}	

	The overall idea is to write $-\psi({\bf c})=\psi({\bf c'}),$ for a given ${\bf c} \in \mathcal{C}$ and ${\bf c'} \in \cbar.$ If the first case of~\eqref{eq_negative} holds, then immediately ${\bf c'} = (0,1,\dots, 1) \oplus {\bf c}.$ Otherwise, one needs to apply recursively the result of Proposition~\ref{prop1} to $\psi((0,1,\dots, 1) \oplus {\bf c}) + \psi(0,1,\dots,0),$ until the Schur product vanishes and the value of ${\bf c'}$ is found.

\section{Tiling of $\cbar$ with cosets of $\mathcal{C}$} \label{Sec:3}

	We start by defining formally tiling of finite abelian groups.
	
\begin{definition}\label{def_tiling}\cite{dinitz06} A tiling of a finite abelian group $G \subset \mathbb{R}^n$ is a pair $(A,B)$ of subsets of $G$ such that $A \cap B = \{\bf 0\}$ and every ${\bf g} \in G$ can be uniquely written as ${\bf g}={\bf a}+{\bf b},$ ${\bf a} \in A,$ ${\bf b} \in B.$ We then say that $(A,B)$ is a tiling of $G.$ 
\end{definition} 
	
%	Let $|\mathcal{C}|=2^m$ and $|\cbar| = |\mathcal{C}_1 \times \dots \times \mathcal{C}_L|=2^n,$ $m<n.$ 
	Let $|\mathcal{C}|=2^m$ and $|\cbar| = |\mathcal{C}_1 \times \dots \times \mathcal{C}_L|=2^{\overline{m}},$ where $m<\overline{m} = \sum_{i=1}^L \dim (\mathcal{C}_i)$.

	Due to the fact that $\mathcal{C}$ is an abelian subgroup of the additive group $\cbar,$ we can conclude that cosets of $\mathcal{C}$ tile $\cbar$ into $2^{\overline{m}-m}$ disjoint sets of cardinality $2^m$ each. It allows us to write
    $\cbar = \mathcal{C} \oplus \mathcal{D},$ where $\mathcal{C} \cap \mathcal{D} = \{{\bf 0} \},$ $\mathcal{D} \subset (\cbar \setminus \mathcal{C}) \cup \{{\bf 0}\},$ and the elements ${\bf d}_i \in \mathcal{D}$ are called coset representatives, for $i=1,\dots, 2^{\overline{m}-m}.$ Each coset will have only one representative ${\bf d}_i \in \mathcal{D}.$ 
%\begin{align}
%\cbar = \mathcal{C} ~ \overset{\centerdot}{\cup} ~  (\mathcal{C} \oplus {\bf d}_i),
%\end{align}
%for $i=1,\dots, 2^{\overline{m}-m}-1,$ where the elements ${\bf d}_i \in \cbar \setminus \mathcal{C}$ are called coset representatives. 
	
	Observe that any choice of coset representative can be considered, however, for practical reasons, it is usually taken as the minimum weight codeword among the coset and denoted by coset leader. The next theorem summarizes these properties and it is a well known result in linear algebra.

\begin{theorem}\label{thm_coset} Let $\mathcal{C} \subset \cbar = \mathcal{C}_1 \times \dots \times \mathcal{C}_L \subseteq \mathbb{F}_2^{nL}$ be linear codes. Then, the following properties hold:
\begin{enumerate}
\item[i)] Every coset of $\mathcal{C}$ contains exactly $2^{m}$ codewords.
\item[ii)] If ${\bf x} \in (\mathcal{C} \oplus {\bf y}),$ then $(\mathcal{C} \oplus {\bf x}) = (\mathcal{C} \oplus {\bf y}).$ 
\item [iii)] Every codeword in $\cbar$ is contained in one and only one coset of $\mathcal{C}.$
\item[iv)] There are exactly $2^{\overline{m}-m}$ cosets of $\mathcal{C}.$
\end{enumerate}
\end{theorem}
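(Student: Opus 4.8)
The plan is to recognize Theorem~\ref{thm_coset} as the instance of Lagrange's theorem for the finite abelian group $(\cbar,\oplus)$ with the subgroup $(\mathcal{C},\oplus)$, and to establish the four items in the stated order, since each one feeds into the next. No auxiliary machinery is needed beyond the group axioms on $\mathbb{F}_2^{nL}$ and the fact, recorded just before the theorem, that $\mathcal{C}$ is an abelian subgroup of $\cbar$.

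First I would prove i) by fixing a coset $\mathcal{C}\oplus{\bf y}$ and exhibiting the translation map $f\colon\mathcal{C}\to\mathcal{C}\oplus{\bf y}$ given by $f({\bf c})={\bf c}\oplus{\bf y}$; it is onto by the definition of the coset and one-to-one by cancellation in the group $\mathbb{F}_2^{nL}$, so $|\mathcal{C}\oplus{\bf y}|=|\mathcal{C}|=2^{m}$. Next, for ii), I would write the hypothesis ${\bf x}\in\mathcal{C}\oplus{\bf y}$ as ${\bf x}={\bf c}_0\oplus{\bf y}$ with ${\bf c}_0\in\mathcal{C}$ and deduce $\mathcal{C}\oplus{\bf x}\subseteq\mathcal{C}\oplus{\bf y}$ from closure of $\mathcal{C}$ under $\oplus$; since ${\bf y}={\bf c}_0\oplus{\bf x}$ as well (every element is its own additive inverse over $\mathbb{F}_2$), the symmetric argument yields the reverse inclusion and hence equality of the two cosets.

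Finally, for iii) I would note that ${\bf 0}\in\mathcal{C}$ places every ${\bf x}\in\cbar$ in the coset $\mathcal{C}\oplus{\bf x}$ (existence), while ii) applied to any other coset containing ${\bf x}$ forces it to equal $\mathcal{C}\oplus{\bf x}$ (uniqueness); thus the distinct cosets partition $\cbar$ into disjoint blocks, each of size $2^{m}$ by i). Then iv) is a counting step: if there are $N$ distinct cosets, then $N\cdot 2^{m}=|\cbar|=2^{\overline{m}}$, so $N=2^{\overline{m}-m}$. I do not anticipate a genuine obstacle, since this is the classical coset decomposition; the only point requiring mild care is the binary arithmetic (in particular that negation is trivial over $\mathbb{F}_2$), and the remark about choosing minimum-weight coset leaders plays no role in the argument.
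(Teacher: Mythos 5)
Your proof is correct and follows essentially the same route as the paper: the paper treats this as the classical coset decomposition (Lagrange's theorem for the subgroup $\mathcal{C}$ of $(\cbar,\oplus)$) and explicitly writes out only item iii), exactly as you do — existence from ${\bf 0}\in\mathcal{C}$ and uniqueness by reducing to item ii). Your write-up is simply more complete in also spelling out i), ii), and iv), which the paper leaves as standard.
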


\begin{proof} For item iii), observe that ${\bf x} \in (\mathcal{C} \oplus {\bf x}),$ since ${\bf x} = ({\bf 0} \oplus {\bf x}),$ ${\bf 0} \in \mathcal{C},$ and every codeword of $\cbar$ is contained in some coset of $\mathcal{C}.$ Moreover, suppose ${\bf x} \in (\mathcal{C} \oplus {\bf y}) ~ \cap ~ (\mathcal{C} \oplus {\bf z}),$ then ${\bf x} = {\bf c} \oplus {\bf y} = {\bf c'} \oplus {\bf z}$ and ${\bf y} = 
({\bf c'} \ominus {\bf c}) \oplus {\bf z} \in (\mathcal{C} \oplus {\bf z})$ and by item ii) $(\mathcal{C} \oplus {\bf y}) = (\mathcal{C} \oplus {\bf z}).$  
\end{proof}

\section{Tiling of irregular constellations} \label{Sec:4}

	Unlike codes and lattices, which have the regular structure of vector space and group, respectively, we will now consider constellations that do not possess such a structure, but which are \emph{derived} from linear codes. The idea is to analyze characteristics of such constellations inherited from the linear code.
	
	In this section we will work with the finite representation of the constellations $\Gamma_C$ and $\Gamma_{C^\star},$ given by $X$ and $Y$ respectively (see Definitions~\ref{def_cstar} and \ref{def_c}). Observe that $|X| = |\cbar| = 2^{\overline{m}}$ and $|Y| = |\mathcal{C}| = 2^m.$ By defining a \emph{class} of $\Gamma_{C^\star}$ as the image under $\psi$ of a coset representative of the code $\mathcal{C} \subseteq \mathbb{F}_2^{nL},$ from Theorem~\ref{thm_coset}, we can conclude that there are $|X|/|Y| = 2^{\overline{m}-m}$ distinct classes of $\Gamma_{C^\star}.$
	
	%There will be an abuse of notation here, 
	For convenience, we will use the notation of tiling defined for groups as in Definition~\ref{def_tiling} also for the sets $X$ and $Y,$ which in general do not have group structure, but share the same operation (addition). We must then demonstrate that the representation of $X=Y+Z$ as a sum is unique, which is equivalent to show that $(Y+{\bf z}_j) \cap (Y+{\bf z}_k) = \varnothing,$ for ${\bf z}_j \neq {\bf z}_k \in Z,$ and that $Y \cap Z=\{{\bf 0}\}.$ Here also $|Z| = 2^{\overline{m}-m}$ and the elements of $Z$ will be called class representatives. Instead of \emph{tiling}, it might be more appropriate to use the term \emph{partition of sets}. However, we aimed for a unique notation throughout the paper. We apologize for any confusion arising from this abuse of notation.

    In general, $(Y,Z)$ is not a tiling of $X.$
    
\begin{example} Consider $\mathcal{C}=\{(0,0,0,0),(1,1,0,0),(1,0,1,0),$ $(1,1,1,0)\} \subset \mathbb{F}_2^{4},$ $n=1,~L=4.$ Therefore, $Y=\{0,3,5,6\}$ and $X=\{0,1,2,3,4,5,6,7\} \subset \mathbb{Z}_{16}.$ 
There are four possible choices for the nonzero remaining element of $Z,$ which are $1,2,4,7,$ but observe that none of them generate the whole set $X.$ This is explained by the distance profile of $Y,$ which is not constant and cannot be obtained by a unique translation.
\end{example}

Based on the tiling of the underlying codes we can derive the following condition for the tiling of the finite constellation. From the previous section, we have that $(\mathcal{C},\mathcal{D})$ is a tiling of $\cbar,$ ${\bf d}_i \in \mathcal{D}$ are the coset representatives, for $i=1, \dots, 2^{\overline{m}-m}.$ Let $Z = \{{\bf z}_i:  {\bf z}_i= \psi({\bf d}_i),$ $~\text{for the respective coset representatives}~ {\bf d}_i \in \mathcal{D}\}.$ 

\begin{theorem}\label{thm_zeroschur} $(Y,Z)$ is a tiling of $X$ if $({\bf c} \ast {\bf d}_i)^{(1)} = {\bf 0},$ for all ${\bf c} \in \mathcal{C},$ ${\bf d}_i \in \mathcal{D}.$
\end{theorem}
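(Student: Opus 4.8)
The plan is to transport the coset tiling of $\cbar$ by $\mathcal{C}$ (established in Section~\ref{Sec:3}) to $X$ along the map $\psi$, using the hypothesis precisely to make $\psi$ convert the group operation $\oplus$ on $\cbar$ into the ambient addition of $\mathbb{Z}_{2^L}^n$. First I would record that $\psi$ is a bijection from $\mathbb{F}_2^{nL}$ onto $\mathbb{Z}_{2^L}^n$: in each of the $n$ coordinate positions, $\psi$ sends the bits $\big((c_1)_j,\dots,(c_L)_j\big)$ to $\sum_{i=1}^{L} 2^{i-1}(c_i)_j \in \{0,\dots,2^L-1\}$, which is just the binary expansion and hence a bijection $\mathbb{F}_2^L \to \mathbb{Z}_{2^L}$. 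Consequently $\psi$ restricts to bijections $\mathcal{C}\to Y$, $\mathcal{D}\to Z$ and $\cbar\to X$; in particular the class representatives in $Z$ are pairwise distinct, $|Z| = |\mathcal{D}| = 2^{\overline{m}-m}$, and $\psi({\bf 0}) = {\bf 0}$.

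Second I would convert the hypothesis into an additive identity. Since $\mathcal{C}\subseteq\cbar$ and $\mathcal{D}\subseteq\cbar$, and since the proof of Proposition~\ref{prop1} uses nothing about the code beyond the definition of $\psi$, we may apply it with the linear code $\cbar$ to obtain, for every ${\bf c}\in\mathcal{C}$ and every coset representative ${\bf d}_i\in\mathcal{D}$,
\[
\psi({\bf c})+\psi({\bf d}_i) \;=\; \psi({\bf c}\oplus{\bf d}_i) + 2\,\psi({\bf c}\ast{\bf d}_i) \;=\; \psi({\bf c}\oplus{\bf d}_i) + \psi\big(({\bf c}\ast{\bf d}_i)^{(1)}\big),
\]
where the last step uses $2\psi({\bf v}) = \psi({\bf v}^{(1)})$. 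The assumption $({\bf c}\ast{\bf d}_i)^{(1)} = {\bf 0}$ now kills the carry term, leaving
\[
\psi({\bf c})+\psi({\bf d}_i) \;=\; \psi({\bf c}\oplus{\bf d}_i) \qquad \text{for all } {\bf c}\in\mathcal{C},\ {\bf d}_i\in\mathcal{D},
\]
and the right-hand side lies in $\psi(\cbar) = X$ because $\cbar$ is linear.

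Third I would assemble the tiling. From Section~\ref{Sec:3}, $(\mathcal{C},\mathcal{D})$ is a tiling of $\cbar$, i.e.\ $({\bf c},{\bf d})\mapsto {\bf c}\oplus{\bf d}$ is a bijection $\mathcal{C}\times\mathcal{D}\to\cbar$. Composing the bijections $(\psi|_{\mathcal{C}})^{-1}\times(\psi|_{\mathcal{D}})^{-1}$, then $\oplus$, then $\psi$, and invoking the identity just proved, the map $(y,z)\mapsto y+z$ becomes a bijection $Y\times Z\to X$ — which is exactly the assertion that every $x\in X$ is uniquely of the form $y+z$ with $y\in Y$, $z\in Z$. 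It remains to check $Y\cap Z = \{{\bf 0}\}$: we have ${\bf 0} = \psi({\bf 0})\in Y\cap Z$ since ${\bf 0}\in\mathcal{C}\cap\mathcal{D}$, and conversely any $w\in Y\cap Z$ satisfies $w = \psi({\bf c}) = \psi({\bf d}_i)$, whence ${\bf c}={\bf d}_i$ by injectivity of $\psi$, so ${\bf c}={\bf d}_i={\bf 0}$ because $\mathcal{C}\cap\mathcal{D}=\{{\bf 0}\}$, and thus $w={\bf 0}$. This is precisely Definition~\ref{def_tiling} for $(Y,Z)$ and $X$.

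I do not expect a genuine obstacle; the argument is essentially bookkeeping. The two points that need care are (i) recognising that Proposition~\ref{prop1} is an identity about $\psi$ alone, so it legitimately applies to the mixed pair ${\bf c}\in\mathcal{C}$, ${\bf d}_i\in\mathcal{D}\subseteq\cbar$, and (ii) observing that the hypothesis annihilates exactly the carry term $2\psi({\bf c}\ast{\bf d}_i)$, so that $\psi$ carries $\oplus$ to $+$ on these pairs; once that is in hand, the statement is just the transport of Theorem~\ref{thm_coset} through the bijection $\psi$.
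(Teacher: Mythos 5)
Your proposal is correct and follows essentially the same route as the paper's proof: both use Proposition~\ref{prop1} together with the hypothesis to show that $\psi({\bf c})+\psi({\bf d}_i)=\psi({\bf c}\oplus{\bf d}_i)$, and then transport the coset tiling $(\mathcal{C},\mathcal{D})$ of $\cbar$ through the injective map $\psi$ (the paper phrases the uniqueness part as disjointness of the translates $Y+{\bf z}_j$, while you package the same fact as a composition of bijections). No gaps.
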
	

\begin{proof} 
	Consider ${\bf x} \in X$. By definition, ${\bf x} = \psi({\bf b})$ for some ${\bf b} \in \cbar$. Since $(\mathcal{C}, \mathcal{D})$ is a tiling of $\cbar,$ we can write ${\bf b} = {\bf c} \oplus {\bf d}_i,$ for ${\bf c} \in \mathcal{C},$ ${\bf d}_i \in \mathcal{D},$ $i=1,\dots, 2^{\overline{m}-m}.$ Hence,
\begin{align}
{\bf x} = & ~ \psi({\bf c} \oplus {\bf d}_i) = \psi({\bf c}) + \psi({\bf d}_i) - 2\psi({\bf c} \ast {\bf d}_i) \nonumber \\
 = & ~ \underbrace{\psi({\bf c})}_{\in Y} + \underbrace{\psi({\bf d}_i)}_{\in Z} - \psi\big(\underbrace{({\bf c} \ast {\bf d}_i)^{(1)}\big)}_{{\bf 0},~ \text{by hypothesis}},
\end{align} 
for all ${\bf c} \in \mathcal{C}.$ Clearly $Y \cap Z=\{\bf 0\}.$
Moreover, the intersection of sets indicated by different coset representatives is empty. Indeed, suppose that ${\bf x} \in (Y+ {\bf z}_1) \cap (Y+ {\bf z}_2) =  (Y + \psi({\bf d}_1)) \cap  (Y + \psi({\bf d}_2)),$ for any ${\bf d}_1 \neq {\bf d}_2 \in \mathcal{D}.$ Thus, 
\begin{align}\label{eq_schurzero}
{\bf x}= \psi({\bf c}_1) + \psi({\bf d}_1) = \psi({\bf c}_2) + \psi({\bf d}_2),
\end{align}
where $\psi({\bf c}_1), \psi({\bf c}_2) \in Y.$ By hypothesis, $({\bf c}_1 \ast {\bf d}_1)^{(1)} = {\bf 0} = ({\bf c}_2 \ast {\bf d}_2)^{(1)}$, and therefore, (\ref{eq_schurzero}) implies %yields to
\begin{align}
\psi({\bf c}_1 \oplus {\bf d}_1) = \psi({\bf c}_2 \oplus {\bf d}_2),
\end{align}
and ${\bf c}_1 \oplus {\bf d}_1 = {\bf c}_2 \oplus {\bf d}_2,$ because $\psi$ is injective. However, each ${\bf d}_i,~ i=1,2$ represents a distinct coset of $\mathcal{C}$ and the equality in the above equation means that ${\bf d}_1={\bf d}_2,$ which is a contradiction. Therefore  $(Y+ {\bf z}_1) \cap (Y+ {\bf z}_2) = \varnothing$ and the proof is complete.
\end{proof}

	A particular case where the hypothesis of Theorem~\ref{thm_zeroschur} is satisfied is when each coset of $\cbar$ has a representative in the form $({\bf 0}, \dots, {\bf c}_L) \in \mathcal{D},$ for all ${\bf c}_L \in \mathcal{C}_L.$ Hence, the level shifts of all Schur products will be zero. An example follows next.

\begin{example} 
Consider the linear code $\mathcal{C}$ in Example~\ref{ex:simple}, together with $X$ and $Y$.
%Consider the linear code $\mathcal{C} \subset \mathbb{F}_2^{6},$ with $n=2, L=3,$
%\begin{align}
%\mathcal{C} = & \{ (0,0,0,0,0,0),(0,0,0,1,0,0),(1,1,0,0,0,0), \nonumber \\
%& ~ (1,1,0,1,0,0),(0,0,1,0,1,0),(0,0,1,1,1,0), \nonumber \\
%& ~ (1,1,1,0,1,0),(1,1,1,1,1,0)\}.
%\end{align}
Set ${\bf d_1} = {\bf 0}$ and among the nonzero possible choices for the remaining coset representative in $\mathcal{D},$ we select ${\bf d}_2 = (0,0,0,0,1,0).$ Notice that the Schur product between ${\bf d}_2$ and any element of $\mathcal{C}$ is either zero or ${\bf d}_2,$ whose level shift is zero and the condition of Theorem~\ref{thm_zeroschur} is satisfied. $(Y,Z)$ is a tiling of $X,$ with
$Z=\{(0,0),(4,0)\}.$
\end{example}
	
    The result of Theorem~\ref{thm_zeroschur} is particularly useful when we are working with a small number of levels. Theorem~\ref{thm_necessary} gives a necessary condition for $(Y,Z)$ be a tiling of $X.$
	
\begin{theorem}\label{thm_necessary} If $(Y,Z)$ is a tiling of $X$ and at most the second Schur level shift is vanishing, then the Schur level shift $({\bf c} \ast {\bf d}_i)^{(1)} \in  \cbar,$ for all ${\bf c} \in \mathcal{C}$ and ${\bf d}_i \in \mathcal{D}.$
\end{theorem}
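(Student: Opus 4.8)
The plan is to let the tiling hypothesis do its work in exactly one place: it forces $\psi({\bf c})+\psi({\bf d}_i)\in X=\psi(\cbar)$ for every ${\bf c}\in\mathcal C$ and every ${\bf d}_i\in\mathcal D$ (since a tiling means $Y+Z=X$), and the conclusion then drops out of the injectivity of $\psi$ together with the fact that $\cbar$ is an $\mathbb F_2$-subspace of $\mathbb F_2^{nL}$. I would first record two elementary tools used repeatedly: $2\,\psi({\bf v})=\psi({\bf v}^{(1)})$ for any binary vector ${\bf v}$, and the identity $\psi({\bf a})+\psi({\bf b})=\psi({\bf a}\oplus{\bf b})+2\,\psi({\bf a}\ast{\bf b})$, which is valid for all ${\bf a},{\bf b}\in\mathbb F_2^{nL}$ (its proof, as given for Proposition~\ref{prop1}, never uses that the arguments lie in a code).

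Reading the hypothesis as $({\bf c}\ast{\bf d}_i)^{(2)}={\bf 0}$, the Schur product ${\bf c}\ast{\bf d}_i$ has zeros in its first $L-2$ blocks, so its first level shift $({\bf c}\ast{\bf d}_i)^{(1)}=({\bf 0},\dots,{\bf 0},{\bf c}_{L-1}\ast{\bf d}_{i,L-1})$ is supported on block $L$ alone; in particular $2\,\psi\bigl(({\bf c}\ast{\bf d}_i)^{(1)}\bigr)={\bf 0}$ in $\mathbb Z_{2^L}^n$, and more generally $\psi({\bf a}\ast{\bf g})$ is a block-$L$ element whenever ${\bf g}$ is. The core computation is then
\[
\psi({\bf c})+\psi({\bf d}_i)=\psi({\bf c}\oplus{\bf d}_i)+\psi\bigl(({\bf c}\ast{\bf d}_i)^{(1)}\bigr)=\psi\bigl((\,{\bf c}\oplus{\bf d}_i\,)\oplus({\bf c}\ast{\bf d}_i)^{(1)}\bigr),
\]
where the first equality combines Proposition~\ref{prop1} with $2\psi=\psi(\cdot^{(1)})$, and the second is one further application of Proposition~\ref{prop1} in which the residual Schur term, being supported on block $L$, has vanishing first level shift and hence contributes nothing. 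Now $\psi({\bf c})+\psi({\bf d}_i)\in Y+Z=X=\psi(\cbar)$, so there is ${\bf f}\in\cbar$ with $\psi({\bf f})$ equal to the right-hand side; by injectivity ${\bf f}=({\bf c}\oplus{\bf d}_i)\oplus({\bf c}\ast{\bf d}_i)^{(1)}$, and therefore $({\bf c}\ast{\bf d}_i)^{(1)}={\bf f}\oplus({\bf c}\oplus{\bf d}_i)$ is an $\oplus$-sum of two elements of $\cbar$, hence lies in $\cbar$.

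The one delicate step is the second equality in the displayed line, i.e.\ that adding $\psi\bigl(({\bf c}\ast{\bf d}_i)^{(1)}\bigr)$ to $\psi({\bf c}\oplus{\bf d}_i)$ creates no new carries and so equals $\psi$ of the $\oplus$-sum. This is exactly where the hypothesis $({\bf c}\ast{\bf d}_i)^{(2)}={\bf 0}$ is indispensable: it pins $({\bf c}\ast{\bf d}_i)^{(1)}$ to the most significant block $L$, so that every carry produced in the addition is twice a block-$L$ term and vanishes modulo $2^L$. Without this confinement the residual Schur products would not collapse and the chain would break, which is also why the conclusion only asserts membership in $\cbar$ rather than equality to ${\bf 0}$ as in the sufficient condition of Theorem~\ref{thm_zeroschur}.
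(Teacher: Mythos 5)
Your proof is correct and follows essentially the same route as the paper's: expand $\psi({\bf c})+\psi({\bf d}_i)$ via Proposition~\ref{prop1}, convert the factor of $2$ into a level shift, absorb the first Schur level shift by one further application of Proposition~\ref{prop1}, and then use the tiling hypothesis together with the injectivity of $\psi$ and the linearity of $\cbar$ to conclude that $({\bf c}\ast{\bf d}_i)^{(1)}\in\cbar$. (Your direction of the tiling hypothesis --- every sum $\psi({\bf c})+\psi({\bf d}_i)$ lies in $X$ --- is in fact slightly cleaner than the paper's, which starts from ${\bf x}\in X$ and decomposes it; the two are equivalent since $|Y||Z|=|X|$ forces the sum map $Y\times Z\to X$ to be a bijection.)

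The one point of divergence is your reading of the hypothesis ``at most the second Schur level shift is vanishing.'' You take it to mean $({\bf c}\ast{\bf d}_i)^{(2)}={\bf 0}$, i.e.\ that ${\bf c}\ast{\bf d}_i$ is supported on the last two blocks, and you then \emph{derive} the vanishing of the residual carry $\bigl(({\bf c}\oplus{\bf d}_i)\ast({\bf c}\ast{\bf d}_i)^{(1)}\bigr)^{(1)}$ from that support confinement. The paper instead takes the hypothesis to be exactly the vanishing of that residual carry --- the Schur level shift produced at the \emph{second step of the recursion} --- and invokes it directly. Your condition implies the paper's (a block-$L$-supported vector Schur-multiplied into anything is still block-$L$-supported, hence has zero level shift), so your argument is sound, but it establishes the theorem under a strictly stronger hypothesis than the one the paper intends; under the paper's weaker reading the same chain of equalities goes through verbatim once the residual term is killed by assumption rather than by support.
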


\begin{proof} Because $(Y,Z)$ is a tiling of $X,$ there exist ${\bf y} \in Y$ and ${\bf z}_i \in Z,$ such that if we take ${\bf x} \in \Gamma_C,$ we can write
\begin{align}
{\bf x} = & {\bf y} + {\bf z}_i =  \underbrace{\psi({\bf c})}_{\in Y} + \underbrace{\psi({\bf d}_i)}_{{\in Z}}, 
\end{align}
for some ${\bf c} \in \mathcal{C}, {\bf d}_i \in \mathcal{D}.$ Hence, 
\begin{align*}
{\bf x} = & \psi({\bf c} \oplus {\bf d}_i) + 2\psi({\bf c} \ast {\bf d}_i) = \psi({\bf c} \oplus {\bf d}_i) + \psi\big(({\bf c} \ast {\bf d}_i)^{(1)}\big),
\end{align*}
for ${\bf c \in \mathcal{C}}.$ If $\big({\bf c} \ast {\bf d}_i\big)^{(1)} = {\bf 0},$ then evidently $\big({\bf c} \ast {\bf d}_i\big)^{(1)} \in \cbar.$ Otherwise, if $({\bf c} \ast {\bf d}_i)^{(1)} \neq {\bf 0},$ then
\begin{align*}
{\bf x} = & \psi({\bf c} \oplus {\bf d}_i) + \psi\big(({\bf c} \ast {\bf d}_i)^{(1)}\big) \\
= & \psi({\bf c} \oplus {\bf d}_i \oplus ({\bf c} \ast {\bf d}_i)^{(1)}) + 2\psi({\bf c} \oplus {\bf d}_i \ast ({\bf c} \ast {\bf d}_i^{(1)}))  \\
= & \psi({\bf c} \oplus {\bf d}_i \oplus ({\bf c} \ast {\bf d}_i)^{(1)})) + \psi\underbrace{(({\bf c} \oplus {\bf d}_i \ast ({\bf c} \ast {\bf d}_i)^{(1)})^{(1)})}_{{\bf 0}~ \text{by hypothesis}},
\end{align*}
%it implies on saying 
which implies that ${\bf c} \oplus {\bf d}_i \oplus ({\bf c} \ast {\bf d}_i)^{(1)} \in \cbar$ and consequently $({\bf c} \ast {\bf d}_i)^{(1)} \in \cbar.$ 
%Expanding the definition, we have
% \begin{align*}
% ({\bf c} \ast {\bf d}_i)^{(1)} = & \big(({\bf c}_1, \dots, {\bf c}_L) \ast ({\bf d}_{i_1}, \dots, {\bf d}_{i_L})\big)^{(1)} \nonumber \\
% = & ({\bf 0}, \underbrace{{\bf c}_1 \ast {\bf d}_{i_1}}_{\in \mathcal{C}_2}, \dots, \underbrace{{\bf c}_{L-1} \ast {\bf d}_{i_{L-1}}}_{\in \mathcal{C}_L}) \in \cbar.
% \end{align*}
\end{proof}
	
\begin{example}\label{ex_lattice} Consider $X=\mathbb{Z}_{16}$ and $Y=\{0,5,10,15\}.$ Thus, it is true for this setting of points that
\begin{align}\label{eq_classes}
X = Y + \underbrace{\{0,4,8,12\}}_{Z}.
\end{align}
and $(Y,Z)$ is a tiling of $X.$ The linear codes generating such constellations are, respectively to $Y$ and $X,$
\begin{gather*}
\mathcal{C} = \{(0,0,0,0),(1,0,1,0),(0,1,0,1),(1,1,1,1)\} \subset \mathbb{F}_2^{4} \nonumber \\
\cbar = \mathbb{F}_2^{4}.
\end{gather*}
The classes in the set Z according to~\eqref{eq_classes} are written as $4 = \psi(0,0,1,0),$ $8 = \psi(0,0,0,1),$ and $12 = \psi(0,0,1,1).$ By easy calculations, we can check that at most the second Schur level shifts between the class representatives and codewords of $\mathcal{C}$ are zero. In this particular case, $X$ is an abelian group (and  $\Gamma_C = X + 16\mathbb{Z}$ is a lattice). Theorem~\ref{thm_necessary} then applies.
\end{example}
	
	We could keep calculating the recursion proposed in Theorem~\ref{thm_necessary} and exploiting conditional consequences of the tiling, but that result already implies that there might be significant contributions by considering $X$ to be an abelian group (as in Example~\ref{ex_lattice}), which is the topic of the next section.

\section{When the finer constellation is a lattice} \label{Sec:5}

	From now on, we assume that the finer constellation $\Gamma_C$ is a lattice. Then, since $\Gamma_C=X+2^L\mathbb{Z}^n,$ $X$ will be taken as a finite abelian group of  $\mathbb{Z}_{2^L}^n.$ The result below is intuitive.

\begin{proposition} $\Gamma_C=X+2^L\mathbb{Z}^n$ is a lattice if and only if $X$ is a finite abelian group of $\mathbb{Z}_{2^L}^n.$
\end{proposition}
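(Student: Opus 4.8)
The plan is to reduce the statement to the standard correspondence between subgroups of $\mathbb{Z}^n$ that contain a fixed full-rank sublattice and subgroups of the finite quotient. First I would record three facts that hold for \emph{every} Construction C: the set $\Gamma_C = X + 2^L\mathbb{Z}^n$ is discrete, since it is contained in $\mathbb{Z}^n$; it contains $\mathbf{0}$, since $\mathbf{0} \in \overline{\mathcal{C}}$ (each $\mathcal{C}_i$ is linear) and $\psi(\mathbf{0}) = \mathbf{0}$; and it contains the full-rank sublattice $2^L\mathbb{Z}^n$. Consequently, by the definition of a lattice as a discrete abelian subgroup of $\mathbb{R}^n$, the set $\Gamma_C$ is a lattice if and only if it is closed under addition, i.e. if and only if it is an additive subgroup of $\mathbb{Z}^n$; discreteness and full rank are automatic here and never form an obstruction.

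Next I would make precise the identification of $X$ with a subset of $\mathbb{Z}_{2^L}^n$. Let $\pi : \mathbb{Z}^n \to \mathbb{Z}_{2^L}^n$ be the canonical projection, reducing each coordinate modulo $2^L$; this is a group homomorphism whose kernel is exactly $2^L\mathbb{Z}^n$. Since every coordinate of every element of $X = \psi(\overline{\mathcal{C}})$ lies in $\{0, 1, \dots, 2^L - 1\}$, the restriction of $\pi$ to $X$ is a bijection onto its image $\pi(X)$, and it is $\pi(X)$ to which the phrase ``$X$ is a finite abelian group of $\mathbb{Z}_{2^L}^n$'' refers. Moreover, because $\ker\pi = 2^L\mathbb{Z}^n$, we have $\Gamma_C = X + 2^L\mathbb{Z}^n = \pi^{-1}(\pi(X))$.

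For the ``if'' direction I would assume $\pi(X)$ is a subgroup of $\mathbb{Z}_{2^L}^n$; then $\Gamma_C = \pi^{-1}(\pi(X))$ is a subgroup of $\mathbb{Z}^n$ as the preimage of a subgroup under a homomorphism, hence a lattice by the first paragraph. For the ``only if'' direction I would assume $\Gamma_C$ is a lattice, hence a subgroup of $\mathbb{Z}^n$ containing $\ker\pi$; then $\pi(\Gamma_C)$ is a subgroup of $\mathbb{Z}_{2^L}^n$, and $\pi(\Gamma_C) = \pi(X + 2^L\mathbb{Z}^n) = \pi(X)$, so $X$ (under the above identification) is a finite abelian subgroup of $\mathbb{Z}_{2^L}^n$.

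I do not expect a genuine obstacle; the only points requiring care are (a) confirming, as in the first paragraph, that ``lattice'' adds nothing beyond ``additive subgroup'' once one knows $2^L\mathbb{Z}^n \subseteq \Gamma_C \subseteq \mathbb{Z}^n$, and (b) being explicit that the group law of $\mathbb{Z}_{2^L}^n$ is the coordinatewise reduction of addition in $\mathbb{Z}^n$, so that $\pi$ is indeed a homomorphism and the two notions of ``subgroup'' match under $\pi^{-1}$. If one preferred to avoid the preimage language entirely, one could instead check closure directly: for $x_1 + 2^L a_1$ and $x_2 + 2^L a_2$ in $\Gamma_C$ with $x_1, x_2 \in X$, the sum reduces mod $2^L\mathbb{Z}^n$ to the class of $x_1 + x_2$, which lies in $X$ precisely when $X$ is closed under the group operation of $\mathbb{Z}_{2^L}^n$; but the quotient formulation is cleaner and is the one I would present.
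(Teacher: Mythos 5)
The paper offers no proof of this proposition at all --- it is introduced with the remark that ``the result below is intuitive'' and left unproved --- so there is nothing to compare against except the implicit folklore argument, which your write-up supplies correctly and completely. The quotient formulation $\Gamma_C=\pi^{-1}(\pi(X))$ with $\ker\pi=2^L\mathbb{Z}^n$ is exactly the right way to make the identification between ``$X$ is a subgroup of $\mathbb{Z}_{2^L}^n$'' and ``$\Gamma_C$ is a subgroup of $\mathbb{Z}^n$'' precise, and both directions (preimage of a subgroup is a subgroup; image of a subgroup containing the kernel is a subgroup) are sound. The only step I would tighten is the reduction in your first paragraph, where you pass from ``closed under addition'' to ``additive subgroup'' with an ``i.e.'': closure under addition does not in general give closure under negation, so you should add the one-line observation that $\Gamma_C$ is a union of cosets of $2^L\mathbb{Z}^n$, whence for $v\in\Gamma_C$ one has $-v\in(2^L-1)v+2^L\mathbb{Z}^n\subseteq\Gamma_C$ (or, equivalently, that a finite subset of the quotient group $\mathbb{Z}_{2^L}^n$ containing $\mathbf{0}$ and closed under addition is automatically a subgroup). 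With that remark inserted, your proof is a complete justification of a claim the paper merely asserts, and it also implicitly clarifies what the paper's slightly loose phrase ``finite abelian group \emph{of} $\mathbb{Z}_{2^L}^n$'' must mean, namely a subgroup of $\mathbb{Z}_{2^L}^n$.
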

	
	We are interested in finding conditions on the set $Y$ such that it tiles $X$ and consequently, we will be able to conclude that $\Gamma_{C^\star}$ tiles $\Gamma_C.$ In order to explore this algebraic direction, we need the next definition.
	
\begin{definition} If $A$ is a subset of $B$, we define the set
\begin{align*}
A-A=\{{\bf b} \in B: ~ \exists ~ {\bf a_1}, {\bf a_2} \in A ~ \text{such that} ~ {\bf b}={\bf a_1}-{\bf a_2}\}. 
\end{align*}
\end{definition}
	
%	We denote the group $G=\psi(\cbar)$ and $X= \psi(\mathcal{C}),$ where $X \subset G,$ to simplify the notation. 
	
	The following result originates from group theory \cite{dinitz06} (more details about factoring abelian groups by sets can be found in \cite{newman77} \cite{sands79} \cite{szabo09}).

\begin{theorem}\label{thm_tiling_group} \cite{dinitz06} Let $Y,Z \subseteq X.$ Then $(Y,Z)$ is a tiling of $X$ if and only if $(Y-Y) \cap (Z-Z) = \{{\bf 0}\}$ and $|X| = |Y||Z|.$
%Let $Y,Z \subseteq X,$ where $Z$ is a set of the class representatives. Then $X=Y+Z,$ if and only if $(Y-Y) \cap (Z-Z) = \{{\bf 0}\}.$
\end{theorem}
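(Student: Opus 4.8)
The plan is to prove the two directions of the equivalence separately, working entirely inside the finite abelian group $X$ so that subtraction is well-defined. For the forward direction, assume $(Y,Z)$ is a tiling of $X$: every ${\bf x} \in X$ is \emph{uniquely} written as ${\bf x} = {\bf y} + {\bf z}$ with ${\bf y} \in Y$, ${\bf z} \in Z$. The cardinality identity $|X| = |Y||Z|$ is immediate, since the map $Y \times Z \to X$, $({\bf y},{\bf z}) \mapsto {\bf y}+{\bf z}$ is a bijection by the definition of tiling. For the intersection condition, suppose ${\bf g} \in (Y-Y) \cap (Z-Z)$, so ${\bf g} = {\bf y}_1 - {\bf y}_2 = {\bf z}_2 - {\bf z}_1$ for some ${\bf y}_1,{\bf y}_2 \in Y$ and ${\bf z}_1,{\bf z}_2 \in Z$. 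Rearranging gives ${\bf y}_1 + {\bf z}_1 = {\bf y}_2 + {\bf z}_2$, and uniqueness of the decomposition forces ${\bf y}_1 = {\bf y}_2$ and ${\bf z}_1 = {\bf z}_2$, hence ${\bf g} = {\bf 0}$. Since ${\bf 0} \in (Y-Y) \cap (Z-Z)$ trivially (as $Y,Z$ are nonempty), we conclude $(Y-Y)\cap(Z-Z) = \{{\bf 0}\}$.

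For the converse, assume $(Y-Y) \cap (Z-Z) = \{{\bf 0}\}$ and $|X| = |Y||Z|$. The key observation is that the intersection condition is exactly what forces the addition map $\sigma: Y \times Z \to X$, $({\bf y},{\bf z}) \mapsto {\bf y}+{\bf z}$, to be injective: if ${\bf y}_1 + {\bf z}_1 = {\bf y}_2 + {\bf z}_2$, then ${\bf y}_1 - {\bf y}_2 = {\bf z}_2 - {\bf z}_1$ lies in $(Y-Y)\cap(Z-Z) = \{{\bf 0}\}$, whence ${\bf y}_1 = {\bf y}_2$ and ${\bf z}_1 = {\bf z}_2$. So $\sigma$ is injective; since $|Y \times Z| = |Y||Z| = |X|$ and both sets are finite, $\sigma$ is a bijection, which is precisely the statement that every ${\bf x} \in X$ is uniquely ${\bf y}+{\bf z}$ with ${\bf y}\in Y$, ${\bf z} \in Z$. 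It remains to check the normalization $Y \cap Z = \{{\bf 0}\}$ required by Definition~\ref{def_tiling}. Here one must be slightly careful: the theorem as stated for groups (following \cite{dinitz06}) typically normalizes so that ${\bf 0} \in Y$ and ${\bf 0} \in Z$; under that convention, if ${\bf w} \in Y \cap Z$ then ${\bf w} = {\bf w} - {\bf 0} \in Y - Y$ and ${\bf w} = {\bf w} - {\bf 0} \in Z - Z$, so ${\bf w} = {\bf 0}$, giving $Y \cap Z = \{{\bf 0}\}$.

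I expect the main (and essentially only) subtlety to be the bookkeeping around the normalization ${\bf 0} \in Y \cap Z$: Definition~\ref{def_tiling} demands $A \cap B = \{{\bf 0}\}$, so to match it one needs ${\bf 0}$ to belong to both $Y$ and $Z$, and this has to be threaded through the $Y-Y$, $Z-Z$ argument as above (if the statement is interpreted only up to translation, this is automatic, but it is worth making explicit). The rest is a clean finite-set pigeonhole argument: injectivity of the addition map plus a cardinality count gives surjectivity, and surjectivity-with-equal-cardinality upgrades to the unique-representation property. No properties of $\mathbb{Z}_{2^L}^n$ beyond being a finite abelian group are used, so the proof is identical to the classical group-factorization statement; one can either reproduce the short argument as above or simply cite \cite{dinitz06}.
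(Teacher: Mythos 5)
Your proof is correct and follows essentially the same route as the paper's: both directions hinge on rewriting ${\bf y}_1-{\bf y}_2={\bf z}_2-{\bf z}_1$ as ${\bf y}_1+{\bf z}_1={\bf y}_2+{\bf z}_2$ and combining injectivity of the addition map with the cardinality count. If anything, your version is slightly more careful than the paper's (you make the bijection $Y\times Z\to X$ explicit and address the normalization $Y\cap Z=\{{\bf 0}\}$ required by Definition~\ref{def_tiling}, which the paper's proof leaves implicit), but the underlying argument is the same.
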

\begin{proof} $(\Rightarrow)$ Let $(Y,Z)$ be a tiling of $X.$  If $(Y-Y) \cap (Z-Z) \neq \{{\bf 0}\},$ then there exist distinct elements ${\bf y}_1, {\bf y}_2 \in Y, {\bf z}_1, {\bf z}_2 \in Z$ such that ${\bf y}_1-{\bf y}_2={\bf z}_1-{\bf z}_2 \neq {\bf 0}.$ But then ${\bf y}_1+{\bf z}_2={\bf y}_2+{\bf z}_1,$ which contradicts the fact that $(Y,Z)$ is a tiling of $X.$ On the other hand, if $(Y-Y) \cap (Z-Z) = \{{\bf 0}\}$ and $|X| \neq |Y||Z|$, then  $|X| > |Y||Z|$ and there exist some element of $X$ that is not in $Y + Z,$ which is a contradiction of the fact that $(Y,Z)$ is a tiling of $X.$

%\noindent $(\Leftarrow)$ Now we assume that $(Y-Y) \cap (Z-Z) = \{{\bf 0}\}$ and ${\bf y}_1+{\bf z}_1={\bf y}_2+{\bf z}_2.$ Then ${\bf y}_1-{\bf y}_2={\bf z}_2-{\bf z}_1={\bf 0}$ and ${\bf y}_1={\bf y}_2,~ {\bf z}_1 = {\bf z}_2.$ This means that the representation $Y+Z$ is unique. The fact that $|X| = |Y||Z|$ allows us to guarantee that $(Y,Z)$ is a tiling of $X.$
\noindent $(\Leftarrow)$ Now we assume that $(Y-Y) \cap (Z-Z) = \{{\bf 0}\}$ and ${\bf x} = {\bf y}_1+{\bf z}_1={\bf y}_2+{\bf z}_2$ for some ${\bf x} \in X$. Then ${\bf y}_1-{\bf y}_2={\bf z}_2-{\bf z}_1={\bf 0}$ and ${\bf y}_1={\bf y}_2,~ {\bf z}_1 = {\bf z}_2.$ This means that the representation of ${\bf x}$ is unique. The fact that $|X| = |Y||Z|$ allows us to guarantee that $(Y,Z)$ is a tiling of $X.$
\end{proof}

    When we consider $Y=\psi(\mathcal{C}),$ and $Z=\psi(\mathcal{D})$ as the set of class representatives, the condition $|X| = |Y||Z|$ is always true since the cardinalities of $X,Y$ and $Z$ are powers of $2,$ as stated previously. Therefore, the only condition that needs to be verified is $(Y-Y) \cap (Z-Z) = \{ {\bf 0}\}.$
	
	Consider an element $w \in (Y-Y) \cap (Z-Z).$ Then, for ${\bf c}_1, {\bf c}_2 \in \mathcal{C},$ ${\bf d}_1, {\bf d}_2 \in \mathcal{D}$,
\begin{align}\label{eq_intersection}
w = \psi({\bf c}_1) - \psi({\bf c}_2) = \psi({\bf d}_1) - \psi({\bf d}_2).
\end{align}
Elaborating \eqref{eq_intersection}, we get
{\small \begin{align*}
\psi({\bf c}_1) + \psi({\bf d}_2)  & = \psi({\bf c}_2) + \psi({\bf d}_1) \nonumber \\
\psi({\bf c}_1 \oplus {\bf d}_2) + 2\psi({\bf c}_1 \ast {\bf d}_2) & = \psi({\bf c}_2 \oplus {\bf d}_1) + 2\psi({\bf c}_2 \ast {\bf d}_1) \nonumber \\
\psi({\bf c}_1 \oplus {\bf d}_2) + \psi\big(\underbrace{({\bf c}_1 \ast {\bf d}_2)^{(1)}}_{\in \cbar}\big) & = \psi({\bf c}_2 \oplus {\bf d}_1) + \psi\big(\underbrace{({\bf c}_2 \ast {\bf d}_1)^{(1)}}_{\in \cbar}\big),
\end{align*}}
because $X = \psi(\cbar)$ is an additive group and the chain of codes $\mathcal{C}_1 \subseteq \dots \subseteq \mathcal{C}_L$ is closed under the Schur product. By proceeding recursively, until the Schur product is zero, we will end up with the following representation
\begin{align}\label{eq_schurs}
\psi({\bf c}_1 \oplus ({\bf d}_2 \oplus {\bf s})) = \psi({\bf c}_2 \oplus ({\bf d}_1 \oplus {\bf \tilde{s}})),
\end{align} 
%where ${\bf s}, {\bf \tilde{s}}$ represent the sum in $\mathbb{F}_2$ of the Schur products. 
where ${\bf s}, {\bf \tilde{s}}$ represent the sums in $\mathbb{F}_2^n$ of the recursive Schur level shifts of elements of $\mathcal{C}$ with elements of $\cbar \setminus \mathcal{C}$.
Since $\psi$ is injective, we can conclude from~\eqref{eq_schurs} that ${\bf c}_1 \oplus ({\bf d}_2 \oplus {\bf s}) = {\bf c}_2 \oplus ({\bf d}_1 \oplus {\bf \tilde{s}}).$ 

	If ${\bf d}_2 \oplus {\bf s}, {\bf d}_1 \oplus {\bf \tilde{s}} \in (\cbar \setminus \mathcal{C}) \cup \{{\bf 0}\},$ we can conclude that ${\bf c}_1 = {\bf c}_2$ and  $w={\bf 0}.$ Consequently, the result below has just been demonstrated.

\begin{theorem}\label{thm_tiling_code} Let ${\bf s}$ and ${\bf \tilde{s}}$ be defined as in \eqref{eq_schurs}. %denote the sums in $\mathbb{F}_2^n$ of the recursive Schur level shifts of elements of $\mathcal{C}$ with elements of $\cbar \setminus \mathcal{C},$ as stated previously in Eq.~\eqref{eq_schurs}. 
If ${\bf s},{\bf \tilde{s}} \in (\cbar \setminus \mathcal{C}) \cup \{{\bf 0}\},$ then $(Y,Z)$ is a tiling of $X.$
\end{theorem}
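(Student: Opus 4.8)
The plan is to deduce this from Theorem~\ref{thm_tiling_group}. Since $|X|=|\cbar|=2^{\overline m}$, $|Y|=|\mathcal C|=2^m$ and $|Z|=2^{\overline m-m}$ are all powers of two with $|Y|\,|Z|=|X|$, the cardinality hypothesis of Theorem~\ref{thm_tiling_group} is automatic, so it suffices to establish $(Y-Y)\cap(Z-Z)=\{{\bf 0}\}$. Concretely, the proof is essentially a matter of assembling into a self-contained argument the computation carried out in the paragraphs preceding the statement.

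First I would fix an arbitrary $w\in(Y-Y)\cap(Z-Z)$ and write $w=\psi({\bf c}_1)-\psi({\bf c}_2)=\psi({\bf d}_1)-\psi({\bf d}_2)$ with ${\bf c}_1,{\bf c}_2\in\mathcal C$ and ${\bf d}_1,{\bf d}_2\in\mathcal D$; moving the negative terms across gives the identity $\psi({\bf c}_1)+\psi({\bf d}_2)=\psi({\bf c}_2)+\psi({\bf d}_1)$ inside the abelian group $X$. I would then apply Proposition~\ref{prop1} to both sides, replacing each $\psi({\bf c}_i)+\psi({\bf d}_j)$ by $\psi({\bf c}_i\oplus{\bf d}_j)+\psi\big(({\bf c}_i\ast{\bf d}_j)^{(1)}\big)$, and iterate this step. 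What makes the iteration legitimate is that a Schur level shift of two elements of $\cbar$ is again an element of $\cbar$ --- this uses precisely that $\Gamma_C$ is a lattice, i.e.\ that the chain $\mathcal C_1\subseteq\dots\subseteq\mathcal C_L$ is closed under the Schur product --- so each remaining Schur term can be re-absorbed while $X$ stays a group under these manipulations. Because each level shift pushes support one block to the right, after at most $L-1$ rounds the Schur term is ${\bf 0}$, and one is left with \eqref{eq_schurs}, $\psi\big({\bf c}_1\oplus({\bf d}_2\oplus{\bf s})\big)=\psi\big({\bf c}_2\oplus({\bf d}_1\oplus{\bf \tilde{s}})\big)$, where ${\bf s},{\bf \tilde{s}}\in\cbar$ accumulate the successive level shifts.

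Next, injectivity of $\psi$ yields ${\bf c}_1\oplus{\bf d}_2\oplus{\bf s}={\bf c}_2\oplus{\bf d}_1\oplus{\bf \tilde{s}}$, so ${\bf d}_2\oplus{\bf s}$ and ${\bf d}_1\oplus{\bf \tilde{s}}$ differ by ${\bf c}_1\oplus{\bf c}_2\in\mathcal C$ and hence lie in the same coset of $\mathcal C$ in $\cbar$. Invoking the hypothesis ${\bf d}_2\oplus{\bf s},{\bf d}_1\oplus{\bf \tilde{s}}\in(\cbar\setminus\mathcal C)\cup\{{\bf 0}\}$, I would conclude that the two elements must in fact coincide, whence ${\bf c}_1={\bf c}_2$ and therefore $w=\psi({\bf c}_1)-\psi({\bf c}_2)={\bf 0}$. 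This proves $(Y-Y)\cap(Z-Z)=\{{\bf 0}\}$ and, by Theorem~\ref{thm_tiling_group}, that $(Y,Z)$ tiles $X$.

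I expect the genuine obstacle to be this last deduction. The recursive rewriting itself is routine once one verifies that it terminates and that every intermediate Schur term stays inside $\cbar$; the delicate point is passing from ``${\bf d}_2\oplus{\bf s}$ and ${\bf d}_1\oplus{\bf \tilde{s}}$ lie in a common coset of $\mathcal C$'' to ``they are equal.'' If the accumulated shifts keep ${\bf d}_2\oplus{\bf s}$ and ${\bf d}_1\oplus{\bf \tilde{s}}$ inside $\mathcal D$ (the chosen coset representatives), equality is immediate since each coset has a unique representative; in the generality of the stated hypothesis one must instead argue carefully that $(\cbar\setminus\mathcal C)\cup\{{\bf 0}\}$ cannot contain two distinct elements of the same nontrivial coset of $\mathcal C$, and this is the step I would scrutinise most closely.
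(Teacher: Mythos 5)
Your proposal follows the paper's own route exactly: the paper's ``proof'' of this theorem is precisely the computation in the paragraphs preceding the statement, which you have reassembled faithfully --- reduce to $(Y-Y)\cap(Z-Z)=\{{\bf 0}\}$ via Theorem~\ref{thm_tiling_group} (the cardinality condition being automatic), expand $\psi({\bf c}_1)+\psi({\bf d}_2)=\psi({\bf c}_2)+\psi({\bf d}_1)$ with Proposition~\ref{prop1}, iterate using Schur-closure of the chain until \eqref{eq_schurs} is reached, and invoke injectivity of $\psi$. Up to that point your argument and the paper's coincide.

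The problem is the final deduction, and you have put your finger on exactly the right spot but proposed a patch that cannot work. From ${\bf c}_1\oplus({\bf d}_2\oplus{\bf s})={\bf c}_2\oplus({\bf d}_1\oplus{\bf \tilde s})$ one only obtains that ${\bf d}_2\oplus{\bf s}$ and ${\bf d}_1\oplus{\bf \tilde s}$ lie in the same coset of $\mathcal{C}$; to conclude ${\bf c}_1={\bf c}_2$ one needs them to be \emph{equal}. Your suggested justification --- that $(\cbar\setminus\mathcal{C})\cup\{{\bf 0}\}$ cannot contain two distinct elements of the same nontrivial coset of $\mathcal{C}$ --- is false: every nontrivial coset of $\mathcal{C}$ is entirely contained in $\cbar\setminus\mathcal{C}$, so this set contains all $2^{m}$ elements of each such coset. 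Membership in $(\cbar\setminus\mathcal{C})\cup\{{\bf 0}\}$ only rules out the case where exactly one of the two elements is a nonzero codeword; it does not force equality. (Note also that you silently replaced the theorem's literal hypothesis ${\bf s},{\bf \tilde s}\in(\cbar\setminus\mathcal{C})\cup\{{\bf 0}\}$ by the condition on ${\bf d}_2\oplus{\bf s}$ and ${\bf d}_1\oplus{\bf \tilde s}$ appearing in the paper's preceding text; neither version closes the gap, since in either case the sum $({\bf d}_2\oplus{\bf s})\oplus({\bf d}_1\oplus{\bf \tilde s})$ may still be a nonzero element of $\mathcal{C}$.) What would actually suffice is a hypothesis guaranteeing that $({\bf d}_2\oplus{\bf s})\oplus({\bf d}_1\oplus{\bf \tilde s})\in\mathcal{C}$ forces ${\bf d}_2\oplus{\bf s}={\bf d}_1\oplus{\bf \tilde s}$ --- for instance that both land in the chosen transversal $\mathcal{D}$, or directly that ${\bf d}_1\oplus{\bf d}_2\oplus{\bf s}\oplus{\bf \tilde s}\notin\mathcal{C}\setminus\{{\bf 0}\}$. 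In fairness, the paper's own text makes the same unjustified leap (``we can conclude that ${\bf c}_1={\bf c}_2$''), so you have reproduced the argument as given, gap included; but as a proof of the stated theorem the last step is not established.
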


    The condition of Theorem~\ref{thm_tiling_code} might be hard to check for dense constellations. A practical question that remains unsolved is: how can we select the set $Z,$ given that $Y=\psi(\mathcal{C})$ is fixed, based on the linear codes? To understand the process we carry out as follows.
    
    Initially, consider the set $\mathcal{S} = (\cbar \setminus \mathcal{C}) \cup \{{\bf 0}\},$ where we must select the elements to compose $\mathcal{D}$ and $Z=\psi(\mathcal{D}).$ 

\begin{enumerate}
    \item For every element ${\bf c} \in \mathcal{C},$ find the correspondent ${\bf c'} \in \cbar$ such that $-\psi({\bf c})=\psi({\bf c'}),$ according to Eqs.~\eqref{eq_negative0} and \eqref{eq_negative}. Remove all these elements ${\bf c'}$ from the set $\mathcal{S}.$
    \item Consider now the elements ${\bf \overline{c}} \in \cbar$ such that $\psi({\bf c}) -\psi({\bf \tilde{c}}) = \psi({\bf \overline{c}}),$ for all nonzero ${\bf c} \neq {\bf \tilde{c}} \in \mathcal{C}.$ Remove all these elements ${\bf \overline{c}}$ from the set $\mathcal{S}$ as well. 
    \item Thus, $\mathcal{D} \subseteq \mathcal{S}.$ If $\mathcal{D} \subsetneq \mathcal{S},$ then one must select $2^{\overline{m}-m}-1$ nonzero elements in $\mathcal{S}$ to form $\mathcal{D},$ such that for $Z=\psi(\mathcal{D})$ the relation $(Y-Y) \cup (Z-Z) = \{{\bf 0}\}$ holds.
\end{enumerate}

    This process is shown with details in the next example.  

\begin{example} Consider $X=\mathbb{Z}_{16},$ $Y=\{0,3,12,15\}$ and $Z=\{0,\textcolor{red}{{\bf 2}},\textcolor{green}{{\bf 8}},\textcolor{blue}{{\bf 10}}\}$ (colors refer to $X=Y+Z$ in Figure~\ref{fig:ex_tiling}), %From Theorem \ref{thm_tiling_group}, we can see that
 Following Theorem~\ref{thm_tiling_group}, we need to check that for $Y$ and $Z,$ $(Y-Y) \cap (Z-Z)=\{{\bf 0}\}$. 
\begin{align*}
Y-Y=& \{0,1,3,4,7,9,12,13,15\}, \nonumber \\
Z-Z= &\{0,2,6,8,10,14\}.
\end{align*}
Clearly $(Y-Y) \cap (Z-Z)=\{{\bf 0}\}$ and $(Y,Z)$ is a tiling of $X,$ illustrated in Figure~\ref{fig:ex_tiling}. %Points in red represent translations by $2,$ in green by $8,$ and in blue by $10.$

\begin{figure}[h!]
    \centering
    \includegraphics[scale=0.199]{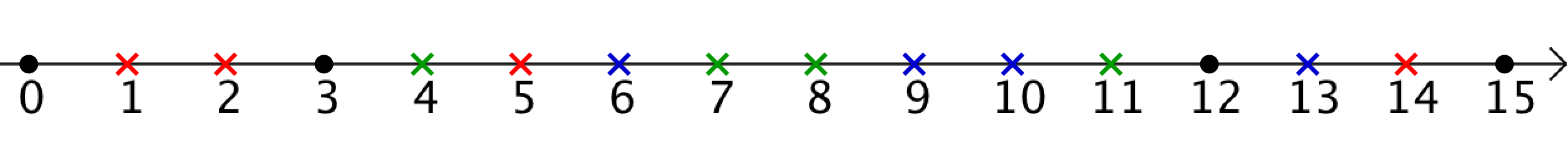}
    \caption{Tiling of $\mathbb{Z}_{16}$ by $(Y,Z).$}
    \label{fig:ex_tiling}
\end{figure}

    Given $X$ and $Y,$ we will explain how one can choose $Z$ properly. Recall that $\mathcal{C}=\{(0,0,0,0),(1,1,0,0),(0,0,1,1),$ $(1,1,1,1)\}$ and $\cbar = \mathbb{F}_2^4,$ here $n=1$ and $L=4.$ Start with $\mathcal{S}=(\cbar \setminus \mathcal{C}) \cup \{{\bf 0}\},$ i.e., 
\begin{align*}
\mathcal{S} =  & \{(0,0,0,0),(1,0,0,0),(0,1,0,0), (0,0,1,0), (0,0,0,1),\\
 & (1,0,1,1),(0,1,1,1),(1,1,0,1),(1,1,1,0),(0,1,0,1), \\
& (1,0,0,1),(1,0,1,0), (0,1,1,0)\}.
\end{align*}

    We calculate ${\bf c'}$ such that $-\psi({\bf c}) = \psi({\bf c'})$ and remove such elements from $\mathcal{S}.$ Then
\begin{align*}
\mathcal{S} =  & \{(0,0,0,0),(0,1,0,0),(0,0,0,1), (0,1,1,1),(1,1,0,1), \\
 & (1,1,1,0),(1,0,1,0),(0,1,0,1),(1,0,0,1), (0,1,1,0)\}.
\end{align*}

    Moving forward, we exclude from $\mathcal{S}$ the elements ${\bf \overline{c}},$ and
\begin{align*}
\mathcal{S} =  & \{(0,0,0,0),(0,1,0,0),(0,0,0,1),(0,1,1,1),(1,1,0,1), \\
 & (1,0,1,0),(0,1,0,1),(0,1,1,0)\}.
\end{align*}

    Observe that $2^3=|\mathcal{S}|>|\mathcal{D}|=2^2.$ Since $\mathcal{C} \oplus (0,0,0,1) \simeq \mathcal{C} \oplus (1,1,0,1),$ and $(0,0,0,1)$ is a straightforward choice for representing its class, as the Schur level shift with any element of $\mathcal{C}$ is always zero, we update $\mathcal{S} \setminus \{(1,1,0,1)\}.$ Thus, 
\begin{align*}
\mathcal{D} \subset \mathcal{S} = & \{(0,0,0,0),(0,1,0,0),(0,0,0,1),(0,1,1,1), \\
 & (1,0,1,0), (0,1,0,1),(0,1,1,0)\}
\end{align*}
which means that $Z \subset \{0,2,5,6,8,10,14\}.$ Because $8$ is fixed, $5$ can also be excluded from $Z,$ because $8-5 \in (Y-Y) \cup (Z-Z).$ The only two possible choices for $Z$ are $Z=\{0,\textcolor{red}{{\bf 2}},\textcolor{green}{{\bf 8}},\textcolor{blue}{{\bf 10}}\}$ (Figure~\ref{fig:ex_tiling}) or $\tilde{Z}=\{0,\textcolor{red}{\bf 6},\textcolor{green}{\bf 8},\textcolor{blue}{\bf 14}\}$ (Figure~\ref{fig:ex_tiling_}).%, where red refers to translation by $6,$ green by $8$ and blue by $14$).

\begin{figure}[H]
    \centering
    \includegraphics[scale=0.195]{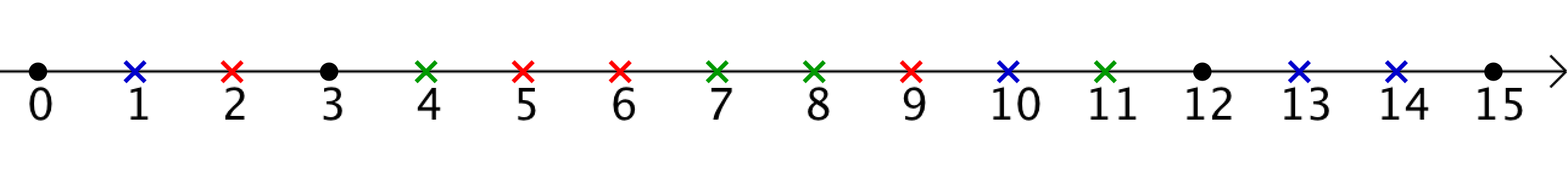}
    \caption{Tiling of $\mathbb{Z}_{16}$ by $(Y,\tilde{Z}).$}
    \label{fig:ex_tiling_}
\end{figure}

    It is interesting that we do not have equivalence of classes, but both $(Y,Z)$ and $(Y,\tilde{Z})$ are tilings of $X,$ which means that the tiling is not unique in general.
\end{example} 

%    This same idea can be implemented for any dimension, without loss of generality. 

\section{Conclusion and future work} \label{Sec:Conc}

%	Motivated by the coset encoding on the wiretap channel and the efficient alternative provided by Construction C with the multistage decoding, 
    We have investigated the tiling of a subset $X \subseteq \mathbb{Z}_{2^L}^n$ that in general is not an abelian group. $X$ is generated by a linear code, thus we use the tiling of the underlying code to derive conditions on the tiling of $X$. We have also applied a result about finite abelian group factorization to explore the tiling of $X,$ for the lattice case. As a consequence of that, it is possible to tile $\Gamma_C$ with a subset $\Gamma_{C^\star}$ under certain assumptions.% We aim to investigate the application of such a coset structure to the wiretap channel in an extended version.
	
	% and we aim to explore in detail how we can derive an analogous \textit{secrecy function} to evaluate the tradeoff between reliability and security regarding the implementation of such constellations on the wiretap channel. Another potential application lies in the area of post-quantum  public-key cryptographic schemes, combining the properties of lattices an codes. 
	
%\section*{Acknowledgment}

%\vspace{0.2cm}

% that's all folks
\end{document}